\documentclass[letterpaper,11pt,prl,reprint,aps,showpacs,nofootinbib]{revtex4-2}

\usepackage{amsmath}
\usepackage{amssymb}
\usepackage{graphicx}
\usepackage[colorlinks=true, allcolors=blue]{hyperref}
\usepackage{amsthm}
\usepackage{mathtools}
\usepackage{appendix}
\usepackage{tikz-cd}
\usepackage[all,cmtip]{xy}

\newtheorem{definition}{Definition}
\newtheorem{theorem}{Theorem}
\newtheorem{lemma}{Lemma}
\newtheorem{proposition}{Proposition}
\newtheorem{corollary}{Corollary}

\definecolor{FGreen}{RGB}{1,100,10}

\newcommand{\drawgenerator}[8]{%
\xymatrix@!0{%
& #8 \ar@{-}[ld]\ar@{.}[dd] \ar@{-}[rr] & & #7 \ar@{-}[ld]  \\%
#1 \ar@{-}[rr] \ar@{-}[dd] &  & #2 \ar@{-}[dd] &            \\%
& #6 \ar@{.}[ld] &  & #5 \ar@{-}[uu] \ar@{.}[ll]       \\%
#3 \ar@{-}[rr] &  & #4 \ar@{-}[ru]                       %
}}

\newcommand{\plaquette}[4]{
\xymatrix@!0{%
#1 \ar@{-}[r] \ar@{-}[d]  & #2 \ar@{-}[d] 
\\
#3 \ar@{-}[r]  & #4
}}

\begin{document}

\title{
A Classification of Translation-Invariant Quantum Codes in Any Dimension
}
\author{Andrew Li}
\author{Dominic J. Williamson}
\affiliation{School of Physics, University of Sydney, Sydney, NSW 2006, Australia}
\date{Aug 2026}

\begin{abstract}
\noindent
Quantum error-correcting codes with two-dimensional translation invariance are known to be equivalent to copies of the two-dimensional toric code. Such a simple classification is not possible for quantum codes with higher dimensional translation invariance due to the existence of multiple types of toric codes and infinite families of fracton codes. Here, we focus on $D$-dimensional translation-invariant quantum codes based on length-$D$ chain complexes. This includes multivariate multicycle codes where the number of variables equals the number of cycles. We show that such codes are equivalent to copies of $D$-dimensional toric codes. This directly generalizes the classification result for two-dimensional translation-invariant codes. 
\end{abstract}

\maketitle

\vspace{1cm}

%%%%%%%%%%%%%%%%%%%%%%%%%%%%%%%%%%%%%%%%%%%%%%%%%%%%%%%%%%%%%%%%%%%%
Quantum computers rely on quantum error-correcting codes to perform useful computations fault tolerantly in the presence of realistic noise. 
The surface code has underpinned standard fault-tolerant quantum architectures for decades~\cite{Kitaev_2003,bravyi1998quantumcodeslatticeboundary,Dennis_2002}. 
Research on the surface code, and other topological codes, has generated a productive dialogue between the quantum error correction and quantum phases of matter communities~\cite{Kitaev_2006,Nayak_2008,Levin_2005,Koenig_2010}. 
Recently, quantum low-density parity-check (qLDPC) codes that generalize the surface code by allowing long range connectivity have become the focus of intense activity due to their promising performance~\cite{Breuckmann_2021}. 

Translation-invariant (TI) qLDPC codes, including bivariate bicycle codes, possess a symmetry structure that simplifies their implementation and fault-tolerant logic~\cite{Bravyi2024,PhysRevA.88.012311,Panteleev_2021,wang2023abeliannonabelianquantumtwoblock,Haah_2017,KalPan20}. 
All two-dimensional TI codes with fixed check structure and a growing distance are equivalent to copies of the toric code up to finite coarse graining, the additional of ancillary qubits in a product state, and the application of a local unitary circuit~\cite{Haah_2013}. 
This structure has been applied to characterize two-dimensional TI qLDPC code properties in terms of symmetry-enriched toric code anyons~\cite{Dua_2019,Liang_2024,Chen_2025,Liang_2025,hopkin2026translationinvariantquantumlowdensityparitycheck,wang2026decoupling2dtranslationinvarianttopological}. 
In higher dimensions, such a simple classification result is not possible due to the existence of multiple types of toric codes and infinite classes of inequivalent fracton codes~\cite{Nandkishore_2019,Pretko_2020,Dua_2019sorting}. 

In this work we consider $D$-dimensional translation-invariant (DDTI) quantum codes that are derived from length-$D$ chain complexes. This is a natural condition guaranteeing the dimension of a code's chain complex matches the dimension of its translation group. 
This class of codes includes all $D$-variate $D$-cycle codes and does not include fracton codes which have a smaller chain complex length than translation group dimension. 
We establish a structure theorem for such DDTI codes, by showing that they are equivalent to copies of $D$-dimensional toric codes up to finite coarse graining, the additional of ancillary qubits in a product state, and the application of a local unitary circuit. 
This establishes a full classification of such DDTI codes in terms of copies of $D$-dimensional toric codes. 
We note that in dimension 4 and above, there are multiple inequivalent types of toric code which may appear in the classification. 

\begin{figure}
    \centering
    \includegraphics[width=1\linewidth]{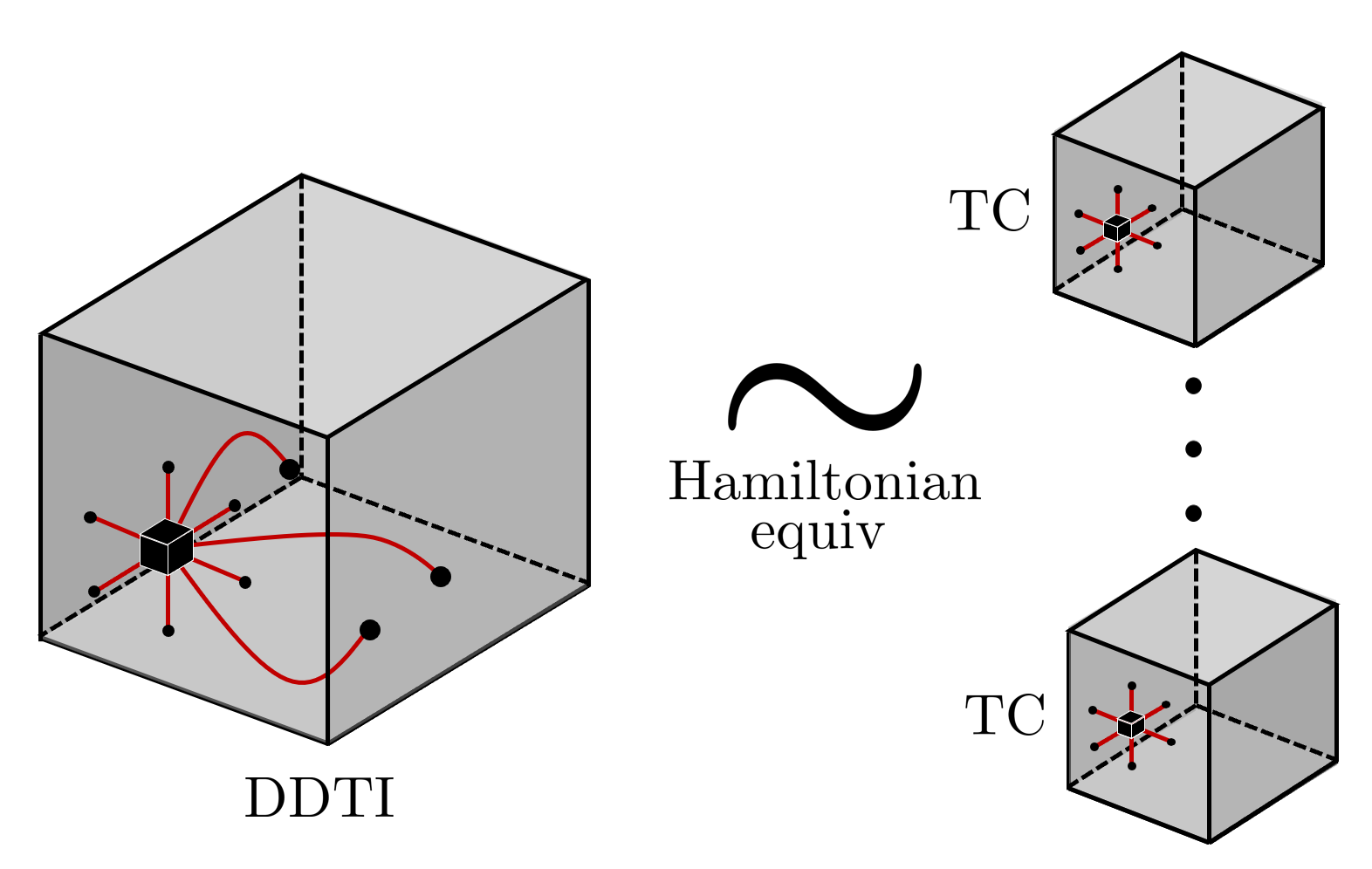}
    \caption{A depiction of our main result.
    Any $D$-dimensional translation-invariant (DDTI) qLDPC code based on a length-$D$ chain complex, satisfying the conditions in Theorem~\ref{ham_equivalence}, is equivalent to copies of $D$-dimensional toric codes (TC). 
    This includes all multivariate multicycle codes with an equal number of variables and cycles.}
    \label{fig:decoupling}
\end{figure}

%%%%%%%%%%%%%%%%%%%%%%%%%%%%%%%%%%%%%%%%%%%%%%%%%%%%%%%%%%%%%%%%%%%%

\noindent\textbf{CSS codes as chain complexes} ---
A chain complex $C_\bullet$ is a sequence of vector spaces (or modules) $C_n$ indexed by the integers $\mathbb{Z}$ equipped with differentials ${d_n : C_n \to C_{n-1}}$ such that $d_{n-1} d_{n} = 0$. A chain complex is bounded if only finitely many $C_n$ are nonzero, and its length is the difference between the minimum and maximum degrees $n$ at which $C_n$ is nonzero. One can describe a CSS code \cite{PhysRevA.54.1098, PhysRevA.54.4741} as a length-2 chain complex of $\mathbb{F}_2$ vector spaces:
\begin{align}
    C_2 \xrightarrow[]{d_2} C_1 \xrightarrow[]{d_1} C_0
\end{align}
where $C_2$ admits a basis labelled by $Z$ stabiliser checks, $C_1$ labelled by the qubits of the code, and $C_0$ labelled by the $X$ stabiliser checks. The differentials $d_2$ and $d_1$ are linear maps that send $Z$ stabilisers to their qubit support and the qubit support of $Z$ operators to their $X$ stabiliser syndromes, respectively. The $Z$ logical operators are given by the 1-cycles $Z_1 := \ker d_1$ and the $Z$ stabilisers are given by the 1-boundaries $B_1 := \mathrm{im}\, d_2$. The $Z$ logical classes are thus described by the homology group $H_1 := Z_1 / B_1$.
The cochain complex $C^\bullet$ is:
\begin{align}
    C^2 \xleftarrow[]{d_2^*} C^1 \xleftarrow[]{d_1^*} C^0    
\end{align}
where for finitely many qubits, $C^i \cong C_i$; $d_1^*$ takes X stabilisers to their support and $d_2^*$ takes an X operator to its Z stabiliser syndrome. $Z^1 := \ker d_2^*$, $B^1 := \mathrm{im} \,d_1^*$, and $H^1 := Z^1/B^1$ respectively describe the X logicals, X stabilisers, and X logical classes.

\noindent\textbf{Translation-invariant codes} ---
We consider the $D$-dimensional lattice $\mathbb{Z}^D$, with $q$ qubits at each site. We then enforce stabilisers on this lattice of qubits; the resulting code is translation-invariant (TI) if all translations of a stabiliser are themselves stabilisers. If the code is CSS, its CSS chain complex is a sequence of finitely-generated free $R$-modules over $R=\mathbb{F}_2[x_1^{\pm 1}, \dots, x_D^{\pm 1}]$:
\begin{align}
    F_2 \xrightarrow[]{d_2} F_1 := R^q \xrightarrow[]{d_1} F_0
\end{align}
The translational invariance of the stabilisers is enforced by the $R$-linearity of $d_2$ and $d_1$. 
Here, the monomial $x_1^{i_1}\cdots x_D^{i_D}$ indicates the site at coordinate $(i_1,\dots,i_D)$.
We also consider the complex $F^\bullet$:
\begin{align}
    F^2 \xleftarrow[]{d_2^\dagger} F^1 \xleftarrow[]{d_1^\dagger} F^0    
\end{align}
with $F^0 = F_0$, etc.
We call this the ZX-dual complex; although we overload notation, note that it is not the same as the cochain complex (considering $F_i$ as vector spaces).
It is typical to demand that both complexes are exact at $F_1$ and $F^1$ respectively; this ensures that there are no finite-weight logicals and thus when we compactify to a finite lattice the distance grows with the lattice size. 
Any exact CSS complex can be embedded into a bounded free resolution of some $R$-module $M$; by 
this we mean a bounded complex of free modules which when extended by adding a differential to $M$ becomes exact. 

If we can extend the complex to the right, this gives rise to local $X$ metachecks, and if we can extend it to the left, we find local $Z$ metachecks. Metachecks refer to nontrivial collections of checks that multiply to the identity. 
In particular, one can obtain the infinite-lattice counterpart of multivariate multicycle (MVMC) codes by extending the presentation 
$$R^k \xrightarrow[]{[f_1,\dots,f_k]} R \to R/(f_1,\dots f_k),$$ 
where $(f_1,\dots f_k)$ is a regular sequence (this formalizes the notion that each $f_i$ should be independent), into a resolution and then taking a length-2 segment of the resolution \cite{mian2026multivariatemulticyclecodescomplete}. Such a resolution is called a Koszul complex.

\begin{definition}
    The CSS complex of an MVMC code on an infinite lattice is given by a length-2 segment, at degrees $n+1$ to $n-1$ (where $1 \leq n < k$), of the Koszul complex derived from a regular sequence $(f_1,\dots,f_k)$. If $k=D$, we call it a $D$-variate $D$-cycle code.
\end{definition}

Note that we cannot have more cycles than variables. One such family of codes obtained from this construction are the $D$-dimensional toric codes \cite{Dennis_2002}:

\begin{definition}
    For $1 \leq n < D$, the infinite-lattice $(n,D-n)$-toric code is obtained from the exact CSS complex $F_{n+1} \to F_n \to F_{n-1}$ where $F_\bullet$ is the Koszul complex resolving $R/(x_1 - 1, \dots, x_D - 1)$. Compactifying to a finite lattice yields an $(n,D-n)$-toric code.
\end{definition}

In particular, the checks of the 2D toric code are:
\begin{align}
    d_1^\dagger=
    \begin{pmatrix}
        1+\bar{x} \\
        1+\bar{y}
    \end{pmatrix}
\, ,
\quad d_2=
    \begin{pmatrix}
        1 + y \\
        1 + x
    \end{pmatrix}
    \, .
\end{align}
Explicitly, the translation variables and checks appear as:
\begin{align}
\begin{array}{c}
\plaquette{y}{xy}{1}{x}
\qquad
\plaquette{XI}{XX}{II}{IX}
\qquad\plaquette{ZI}{II}{ZZ}{IZ}
\end{array}
\, .
\label{unit_cell2D}
\end{align}
Here, the left-most cell labels the $\mathbb{Z}^2$-lattice coordinates for reference. As the Koszul complex has length 2, there is only one way to fit the CSS complex into the resolution, and thus only one choice of 2D toric code.

The $X$ checks, $Z$ checks, and $Z$ metachecks of the 3D toric code are respectively given by:
\begin{align}
    d_1^\dagger = 
    \begin{pmatrix}
        1+\bar{x} \\
        1 + \bar{y} \\
        1+ \bar{z}
    \end{pmatrix}
    d_2 =
    \begin{pmatrix}
        0 & 1+z & 1+y  \\
        1+z & 0 & 1+x \\
        1+y & 1+x & 0
    \end{pmatrix}
    d_3=
    \begin{pmatrix}
        1 + x \\
        1 + y \\
        1+ z
    \end{pmatrix}
\end{align}
The $\mathbb{Z}^3$-lattice coordinates and metachecks are depicted:
\begin{align}
\begin{array}{c}
\drawgenerator{xz}{xyz}{x}{xy}{y}{1}{yz}{z}
\end{array}
\qquad 
\begin{array}{c}
\drawgenerator{}{}{s^Z_{\hat{x}}}{}{s^Z_{\hat{y}}}{s^Z_{\hat{x}}s^Z_{\hat{y}}s^Z_{\hat{z}}}{}{s^Z_{\hat{z}}}
\end{array}
\, ,
\label{unit_cell3D}
\end{align}
and similarly for the checks $s^X,s_{\hat{x}}^Z,s_{\hat{y}}^Z,s_{\hat{z}}^Z$ in a unit cell :
\begin{align}
\begin{array}{c}
\drawgenerator{IXI}{XXX}{}{IIX}{}{}{XII}{}
\quad
\drawgenerator{}{}{}{}{IIZ}{IZZ}{}{IZI}
\quad
\\
\drawgenerator{}{}{IIZ}{}{}{ZIZ}{}{ZII}
\quad
\drawgenerator{}{}{IZI}{}{ZII}{ZZI}{}{}
\end{array}
\, .
\label{3DTCstabilizers}
\end{align}	 
In particular, this is the $(1,2)$-toric code, occupying the right end of the length-3 resolution, thus the presence of local Z but not X metachecks. The $(2,1)$-toric code which occupies the left end of the resolution is the ZX-dual of the $(1,2)$-toric code and thus we typically speak of `the 3D toric code' as unique up to ZX-duality. For higher dimensions, we get meaningfully distinct toric codes; for instance the $(1,3)$ and $(2,2)$-toric codes. The latter, occupying the centre of the chain complex, has both local X and Z metachecks whereas the former occupies the right and has only local Z metachecks.

\noindent\textbf{Hamiltonian equivalence} --- 
We now review Haah's stabiliser complex and formulation of Hamiltonian equivalence \cite{Haah_2013}.
The stabiliser complex is obtained by taking the direct sum of a CSS complex $F_\bullet$ of a TI code on an infinite lattice with its ZX-dual; that is $F_\bullet \oplus F^\bullet$. The cycles $Z(F_1 \oplus F^1)$ form the stabiliser module. Given two infinite-lattice TI codes, we say that their Hamiltonians are equivalent and lie in the same topological phase if their stabiliser modules become identical after some combination of coarse-graining, tensoring ancillas, and symplectic transformations. We briefly describe each of these below.

First we describe coarse-graining. We can select a sublattice of $\Lambda \subseteq \mathbb{Z}^D$ and then consider the group algebra $R' := \mathbb{F}_2[\Lambda]$, which is a subring of $R$. We then coarse-grain a stabiliser complex by taking the restriction of scalars to $R'$; that is we only allow $R'$-multiplication but keep the underlying abelian group that describes the $\mathbb{F}_2$-vector space. 
Consider the chain complex 
$$R \xrightarrow[]{\left[\begin{smallmatrix} 1 \\ 0 \end{smallmatrix}\right]} R^2 \xrightarrow[]{\left[\begin{smallmatrix} 0 & 1 \end{smallmatrix}\right]} R \, ,$$ or its ZX-dual. These correspond to single layers of X or Z product states on the lattice. Taking the direct sum of the stabiliser complex with such a complex is called tensoring an ancilla. A symplectic transformation $T$ is an automorphism of $F_1 \oplus F^1 = R^{2q}$ that preserves the symplectic form, that is $T^\dagger \lambda T = \lambda$ where $\lambda = \left[\begin{smallmatrix} 0 & 1 \\ 1 & 0 \end{smallmatrix}\right]$ in block form.

A well-known result due to Bombín~\cite{Bombin_2012} states that any 2D topological code is equivalent to finitely many copies of the toric code. In Haah's formulation, this is restated as Hamiltonian equivalence between 2D exact stabiliser complexes and finitely many copies of the 2D toric code on the infinite plane~\cite{Haah_2013,Haah_2017}. This implies the existence of a growing family of periodic lattices on which the code, when compactified, is equivalent to a stack of 2D toric codes up to local unitaries and coarse-graining. Below, we describe a generalisation of this result to a class of codes in higher dimensions, including $D$-variate $D$-cycle codes.

%%%%%%%%%%%%%%%%%%%%%%%%%%%%%%%%%%%%%%%%%%%%%%%%%%%%%%%%%%%%%%%%%%%%
\noindent\textbf{Classification of $D$-dimensional Translation-Invariant Quantum Codes} --- We let $R := \mathbb{F}_2[x_1^{\pm 1}, \dots, x_D^{\pm 1}]$ denote positions on the infinite $D$-dimensional hypercubic lattice. Let $F_\bullet$ be a length-$D$ complex of free modules such that $F_\bullet$ and $F^\bullet$ are exact except at their right and left ends respectively. This ensures that if we place a CSS chain complex along any length-2 segment of nonzero modules of $F_\bullet$, it will have growing distance after compactifying to finite lattices. We can frame $F_\bullet$ and $F^\bullet$ as free resolutions of modules, respectively $F_0/ \mathrm{im}\,\partial_1$ and $F^D/\mathrm{im}\,\partial_D^\dagger$. Let us call the former module $M$, so that $F_\bullet$ resolves $M$. Under our conditions, $M$ is a zero-dimensional module, that is its annihilator $\mathrm{ann}_R (M)$ is a zero-dimensional ideal.

\begin{proposition}\label{d-dim-complex}
    Let $F_\bullet$ be a length-$D$ complex of finite-rank free $R$-modules resolving $M$. If $F^\bullet$ is exact at degrees $<D$, then $M$ is a zero-dimensional module.
\end{proposition}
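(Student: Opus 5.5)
The plan is to read the exactness of $F^\bullet$ as vanishing of Ext groups, and then turn that vanishing into a dimension bound via local duality. Write the resolution as
\[
0 \to F_D \xrightarrow{\partial_D} \cdots \xrightarrow{\partial_1} F_0 \to M \to 0.
\]
The ZX-dual complex $F^\bullet$ is, up to the antipode $x_i\mapsto \bar x_i$ (a ring automorphism of $R$), the dual complex $\mathrm{Hom}_R(F_\bullet,R)$. Twisting by an automorphism does not change whether a complex is exact, nor the Krull dimension of supports. So the hypothesis says $H^i(\mathrm{Hom}_R(F_\bullet,R))=0$ for $i<D$. Because $F_\bullet$ is a free resolution of $M$, this means
\[
\mathrm{Ext}^i_R(M,R)=0 \quad \text{for all } i<D .
\]

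The next step is to localize. Since $R$ is a localization of a polynomial ring over a field, it is regular of Krull dimension $D$. Ext commutes with localization, so for every prime $\mathfrak p$ we get $\mathrm{Ext}^i_{R_\mathfrak p}(M_\mathfrak p,R_\mathfrak p)=0$ for $i<D$. The projective dimension of $M$ is at most $D$, because $F_\bullet$ has length $D$. Hence $\mathrm{pd}_{R_\mathfrak p} M_\mathfrak p\le \min(D,\operatorname{ht}\mathfrak p)$.

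Now fix $\mathfrak p\in\mathrm{Supp}\,M$, put $h=\operatorname{ht}\mathfrak p$, and suppose $h<D$. I will use the standard fact that for a nonzero finitely generated module $N$ over a regular (hence Gorenstein) local ring $A$ of dimension $h$,
\[
\operatorname{grade} N=\min\{i:\mathrm{Ext}^i_A(N,A)\neq 0\}=\operatorname{codim}N=h-\dim N\le h .
\]
In particular some $\mathrm{Ext}^i_{R_\mathfrak p}(M_\mathfrak p,R_\mathfrak p)$ with $i\le h<D$ is nonzero. Equivalently, by local duality, $\mathrm{Ext}^{h-j}(N,A)$ is Matlis dual to $H^j_{\mathfrak m}(N)$, which is nonzero for $j=\dim N$. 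Either form contradicts the vanishing above. Therefore every prime in $\mathrm{Supp}\,M$ has height $D$, i.e.\ is maximal.

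$M$ is finitely generated, so $\mathrm{Supp}\,M=V(\mathrm{ann}_R M)$. Since all minimal primes of $\mathrm{ann}_R M$ are maximal, $\dim R/\mathrm{ann}_R M=0$, which is the claim. There is one edge case: if $M=0$, then $\mathrm{ann}_R M=R$. That is either excluded by convention or counts as trivially zero-dimensional, since the support is empty.

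The main obstacle is the step connecting Ext-vanishing to codimension, i.e.\ the grade formula over the localized regular ring. I would justify it by citing the Gorenstein identity $\operatorname{grade}N+\dim N=\dim A$ (Bruns–Herzog) or local duality. A more hands-on alternative runs as follows. The vanishing of $\mathrm{Ext}^i$ for $i<D$ means the dual complex resolves $\mathrm{Ext}^D(M,R)$, so $\operatorname{grade}M\ge D$. Then $\operatorname{ann}M$ contains a regular sequence of length $D$ in $R$, which forces $\operatorname{ht}(\operatorname{ann}M)\ge D=\dim R$. A second, smaller point to check is that the antipode twist really identifies $F^\bullet$ with $\mathrm{Hom}_R(F_\bullet,R)$ up to an automorphism preserving exactness.
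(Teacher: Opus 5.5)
Your proof is correct. Its first step is the paper's: the paper also identifies $F^\bullet$ with $\mathrm{Hom}_R(F_\bullet,R)$ via the antipode-semilinear map $f\mapsto\sum_i\overline{f(e_i)}e_i$. It then rephrases the hypothesis as $\mathrm{Ext}^i_R(M,R)=0$ for $i<D$, i.e.\ $\mathrm{grade}_R(M)\ge D$.

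The two routes differ in how they pass from grade to dimension. The paper argues globally. By Rees' theorem, this Ext-defined grade equals $\mathrm{grade}_R(J)$ for $J=\mathrm{ann}_R(M)$. The general inequality $\mathrm{grade}(J)\le\mathrm{codim}(J)$ in a Noetherian ring then forces $\mathrm{codim}(J)=D=\dim R$. This is exactly your ``hands-on alternative.''

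Your main argument is local instead. At a support prime $\mathfrak p$ of height $h<D$, you use the Cohen--Macaulay grade formula $\mathrm{grade}\,N=h-\dim N$ over the regular ring $R_{\mathfrak p}$ (or Grothendieck nonvanishing with local duality). This produces a nonzero $\mathrm{Ext}^i$ with $i\le h$, contradicting the localized vanishing. That is valid, but heavier than necessary. The bound $\mathrm{pd}_{R_{\mathfrak p}}M_{\mathfrak p}\le h$ that you already wrote down finishes the argument by itself. A nonzero module $N$ of finite projective dimension $p$ over a local ring $A$ has $\mathrm{Ext}^p_A(N,A)\neq 0$: dualize a minimal resolution and apply Nakayama.

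So your version trades Rees' theorem for localization plus an elementary nonvanishing fact, and gives a pointwise statement about each support prime. The paper's is a two-line citation argument phrased directly in terms of the annihilator ideal.
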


\begin{proof}
    See End Matter.
\end{proof}

That $M$ is zero-dimensional tells us that the topological charges (or metacharges) it describes are freely mobile and fall into finitely many superselection sectors.
We recall a result that demonstrates the existence of a constant scale $L$ over which any charge can be moved: 
\begin{proposition}[{\cite[Lemma 7.3]{Haah_2013}}]\label{haah}
Let $M$ be a nontrivial zero-dimensional module. Then there exists $L \geq 1$ such that:
\begin{align}
    \mathrm{ann}_{R'} (M) = (x_1^L - 1, \dots x_D^L-1)
\end{align}
for $R' = \mathbb{F}_2[x_1^{\pm L}, \dots, x_D^{\pm L}] \subseteq R$
\end{proposition}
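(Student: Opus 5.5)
The approach is to show first that some power of each translation $x_i$ acts trivially on $M$, and then to use the fact that the resulting ideal in the coarse-grained ring $R'$ is maximal, so that the inclusion we obtain is forced to be an equality.

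Write $I := \mathrm{ann}_R(M)$. Since $M$ is zero-dimensional, the quotient $R/I$ is a finite-dimensional $\mathbb{F}_2$-algebra. This is the standard meaning of a zero-dimensional ideal in the Noetherian ring $R$: a Krull-dimension-zero quotient of a finitely generated $\mathbb{F}_2$-algebra is Artinian, hence finite-dimensional. So $R/I$ is a finite ring. Each $x_i$ is a unit in $R$, so its image $\bar x_i$ is a unit of $R/I$. The unit group $(R/I)^\times$ is a finite group, so $\bar x_i$ has some finite order $n_i \ge 1$, which means $x_i^{n_i} - 1 \in I$. Set $L := \mathrm{lcm}(n_1,\dots,n_D)$. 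Then $x_i^L - 1 \in I$ for every $i$, since $x_i^L - 1$ is divisible by $x_i^{n_i}-1$.

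Now pass to $R' = \mathbb{F}_2[x_1^{\pm L},\dots,x_D^{\pm L}]$ and write $y_i := x_i^L$. The annihilator of $M$ as an $R'$-module is $I \cap R'$. By the previous step it contains the ideal $J := (y_1 - 1,\dots,y_D-1) \subseteq R'$.

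For the reverse inclusion, note that $R'/J \cong \mathbb{F}_2$, since every $y_i^{\pm1}$ maps to $1$. Hence $J$ is a maximal ideal of $R'$. On the other hand, $I \cap R'$ is a proper ideal: $M$ is nontrivial, so $1 \notin I$. A proper ideal containing a maximal ideal must equal it, so $\mathrm{ann}_{R'}(M) = I\cap R' = J$, which is the claimed statement.

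The main obstacle is the first step: justifying that "zero-dimensional" gives $\dim_{\mathbb{F}_2} R/I < \infty$. I would cite the standard fact that a finitely generated algebra over a field has Krull dimension zero if and only if it is finite-dimensional as a vector space (Noether normalization, or the Nullstellensatz together with Artinian structure). If the paper's definition of zero-dimensional is instead phrased in terms of finitely many maximal ideals, each with finite residue field, the same conclusion follows because $R/I$ is then Artinian with finite residue fields. Once finiteness is available, the maximality argument closes the proof with no further computation.
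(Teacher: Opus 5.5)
Your proof is correct and is essentially the standard argument behind the cited result; the paper gives no proof of its own and defers to Haah's Lemma 7.3. Finiteness of $R/\mathrm{ann}_R(M)$ forces each $x_i$ to have finite multiplicative order, which gives $(x_1^L-1,\dots,x_D^L-1)\subseteq \mathrm{ann}_{R'}(M)$, and maximality of this ideal in $R'$ together with $M\neq 0$ upgrades the inclusion to equality. One small point: ``Artinian'' alone does not give finite $\mathbb{F}_2$-dimension, so it is the Noether normalization (or Zariski's lemma) step you mention that actually does that work.
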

In other words, the above proposition guarantees that after coarse graining by a constant factor $L$ any translate of a charge on the coarse grained lattice is equivalent. 

Next, we state a lemma characterizing the charges on the coarse grained lattice. 
\begin{lemma}\label{coarse_grain}
    Let $N$ be the restriction of scalars of $M$ to $R'$, and $F'_\bullet$ the complex attained by taking the restriction of scalars of each $F_i$. Then $F'_\bullet \to N$ is a free resolution of $N$. Moreover, $N \cong (R'/\mathfrak{m})^{\oplus k}$ where ${\mathfrak{m} = (x_1^L - 1, \dots x_D^L-1)}$ and $k = \mathrm{dim}_{\mathbb{F}_2}(M)$.
\end{lemma}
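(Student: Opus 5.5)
The plan is to prove the two claims separately. The first is a formal property of restriction of scalars. The second follows from Proposition~\ref{haah} together with the observation that $R'/\mathfrak{m}$ is a field.

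\emph{Freeness and exactness.} First observe that $R$ is a free $R'$-module of finite rank $L^D$. A basis is given by the monomials $x_1^{a_1}\cdots x_D^{a_D}$ with $0 \le a_i < L$, since every Laurent monomial in the $x_i$ factors uniquely as such a representative times a monomial in the $x_i^{\pm L}$. Consequently each $F_i \cong R^{r_i}$ restricts to $F'_i \cong (R')^{r_i L^D}$, which is a finite-rank free $R'$-module. The differentials are $R$-linear, hence $R'$-linear, so $F'_\bullet$ is a complex of free $R'$-modules. Exactness of $F_\bullet \to M \to 0$ is a statement about kernels and images of the underlying abelian groups (equivalently, $\mathbb{F}_2$-vector spaces). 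Restriction of scalars does not change these, so $F'_\bullet \to N \to 0$ is exact and $F'_\bullet$ is a free resolution of $N$.

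\emph{Structure of $N$.} By Proposition~\ref{d-dim-complex}, $M$ is zero-dimensional. Since $M$ is finitely generated and $R$ is a finitely generated $\mathbb{F}_2$-algebra, the quotient $R/\mathrm{ann}_R(M)$ is Artinian. It is therefore a finite-dimensional $\mathbb{F}_2$-vector space, and $M$, being a finitely generated module over it, has $k = \dim_{\mathbb{F}_2} M < \infty$. (If $M = 0$, the claim holds trivially with $k = 0$.) Otherwise, Proposition~\ref{haah} gives $\mathrm{ann}_{R'}(M) = \mathfrak{m}$. Here $\mathrm{ann}_{R'}(M)$ is by definition the annihilator of $M$ viewed as an $R'$-module, which is exactly $\mathrm{ann}_{R'}(N)$. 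Thus $N$ is naturally a module over $R'/\mathfrak{m}$.

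Now $R' \cong \mathbb{F}_2[y_1^{\pm1},\dots,y_D^{\pm1}]$ with $y_i = x_i^L$, and $\mathfrak{m} = (y_1-1,\dots,y_D-1)$. Hence $R'/\mathfrak{m} \cong \mathbb{F}_2$, a field, via evaluation at $y_i = 1$. A module over a field is free, so $N$ is a vector space over $R'/\mathfrak{m} \cong \mathbb{F}_2$. Its dimension equals $\dim_{\mathbb{F}_2} N = \dim_{\mathbb{F}_2} M = k$, because restriction of scalars preserves the underlying $\mathbb{F}_2$-structure. This gives $N \cong (R'/\mathfrak{m})^{\oplus k}$.

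There is no real obstacle here. The only points requiring care are two bookkeeping checks. The first is that the annihilator in Proposition~\ref{haah} is indeed that of the restricted module $N$. The second is that the $\mathbb{F}_2$-structure on $N$ through $R'/\mathfrak{m}$ agrees with the original one, which holds because $\mathbb{F}_2 \subset R'$ maps isomorphically onto $R'/\mathfrak{m}$.
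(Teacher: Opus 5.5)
Your proof is correct and follows essentially the same route as the paper: $R$ is free of rank $L^D$ over $R'$, exactness is preserved under restriction of scalars, and Proposition~\ref{haah} together with $R'/\mathfrak{m}\cong\mathbb{F}_2$ identifies $N$ as a $k$-dimensional vector space. The only differences are cosmetic: you get finite-dimensionality from $R/\mathrm{ann}_R(M)$ being Artinian, whereas the paper notes that an $R'$-generating set of $N$ spans it over $\mathbb{F}_2$, and you explicitly handle the case $M=0$ that Proposition~\ref{haah} excludes.
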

This lemma implies that the charge superselection sectors on the coarse-grained lattice can be generated by $k$ independent generators which are each equivalent to their translates on the coarse-grained lattice. 

\begin{proof}
    As $R'$-modules, $R \cong R'^b$ where $b = L^D$, so it is readily seen that each $F_i'$ is free. As exactness is unchanged by restriction, it follows that $F_\bullet' \to N$ is a free resolution. Now, note that $R'/\mathfrak{m} \cong \mathbb{F}_2$. 
    As $\mathrm{ann}_{R'} (N) = \mathfrak{m}$ by Proposition.~\ref{haah}, we have that the $R'$-linear maps and $R'/\mathfrak{m}$ or $\mathbb{F}_2$-linear maps on $N$ coincide. Combining that $N$ is a finitely-generated $R'$-module and $\mathrm{ann}_{R'}(N) = \mathfrak{m}$, we see $N$ is a finite-dimensional $\mathbb{F}_2$-vector space (A generating set of $N$ also serves as a $\mathbb{F}_2$-spanning set) and thus letting its dimension be $k$, the invertible $\mathbb{F}_2$-linear map between $N$ and $\mathbb{F}_2^k$ also serves as an isomorphism between $N$ and $(R'/\mathfrak{m})^{\oplus k}$.
\end{proof}

We next state a lemma that establishes equivalence between different finite chain complexes that provide free resolutions of the same charge module. 
\begin{lemma}\label{tensor_ancilla_symplectic}
    Given two bounded free resolutions $F_\bullet \to M$ and $G_\bullet \to M$ of a finitely-generated $R$-module $M$ by finite-rank free modules, there exist complexes $C_\bullet$ and $C_\bullet'$ such that $F_\bullet + C_\bullet \cong G_\bullet + C_\bullet'$ and each can be written as a direct sum of complexes in varying degrees of the form $0 \to R^n \xrightarrow[]{\mathrm{id}} R^n \to 0$.
\end{lemma}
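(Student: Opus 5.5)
This is a version of Schanuel's lemma for finite free resolutions, combined with the fact that a bounded exact complex of free modules over $R$ splits into trivial complexes. I would argue via the comparison theorem and a mapping cone.

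\textbf{Step 1: a comparison map.} By the comparison theorem for projective resolutions, the identity on $M$ lifts to a chain map $\phi_\bullet : F_\bullet \to G_\bullet$. Because it lifts an isomorphism, $\phi$ is a quasi-isomorphism: both complexes have homology $M$ in degree $0$ and zero elsewhere, and $\phi$ induces the identity there. Hence the mapping cone $\mathrm{Cone}(\phi)$, with $\mathrm{Cone}(\phi)_n = F_{n-1}\oplus G_n$, is a bounded exact complex of finite-rank free $R$-modules.

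\textbf{Step 2: split the cone.} A bounded exact complex of projectives splits. Working from the leftmost end, the first differential is injective into a free module, and its cokernel is projective by induction from the other end, since kernels split off successively. So each $\mathrm{Cone}(\phi)_n \cong K_n \oplus K_{n-1}$ with the differential an isomorphism $K_n \to K_n$, where $K_n$ is the image of the $n$-th differential. These $K_n$ are only finitely generated projective $R$-modules. To make them free I would invoke the Quillen--Suslin theorem in its Laurent polynomial form (Swan's extension): finitely generated projectives over $\mathbb{F}_2[x_1^{\pm1},\dots,x_D^{\pm1}]$ are free. This step is the main obstacle. Without it, I would instead add trivial complexes $0\to R^m\xrightarrow{\mathrm{id}}R^m\to0$ to make each $K_n$ stably free, which is the Eilenberg swindle done finitely. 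Since the lemma allows padding by trivial complexes, stable freeness of each $K_n$ is enough, and stable freeness follows inductively from the finite free exact sequence. So the cone is isomorphic to a direct sum of complexes $0\to R^n\xrightarrow{\mathrm{id}}R^n\to0$, possibly after adding such trivial summands.

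\textbf{Step 3: from the cone back to $F$ and $G$.} I would use the standard argument that a chain map with split-exact (contractible) cone is a homotopy equivalence. Concretely, I would run Schanuel-style degreewise manipulations. The sequence $0\to G_\bullet\to\mathrm{Cone}(\phi)\to F_\bullet[-1]\to0$ is degreewise split. Contractibility of the cone then gives the key identity: $G_\bullet\oplus T \cong F_\bullet \oplus \mathrm{Cone}(\mathrm{id}_{F})$ up to isomorphism of complexes, where $T$ is contractible. The decisive point is that $\mathrm{Cone}(\mathrm{id}_F)$ is itself a direct sum of trivial complexes $0\to F_{n}\xrightarrow{\mathrm{id}}F_n\to0$. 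Alternatively, I would take the more elementary route of an induction on length, mimicking the proof of Schanuel's lemma. At the bottom, $0\to\ker\to F_0\to M$ and its counterpart for $G$ give $\ker_F\oplus G_0\cong \ker_G\oplus F_0$, realised by adding $0\to G_0\xrightarrow{\mathrm{id}}G_0\to0$ to $F$ and $0\to F_0\xrightarrow{\mathrm{id}}F_0\to0$ to $G$ in degrees $0,1$. After an explicit change of basis, the two modified complexes become resolutions of the same module $\ker_F\oplus G_0$ starting in degree $1$, so one iterates. Boundedness ends the induction, and at the top the last modules coincide up to the stable-freeness issue already handled in Step 2. In each step the change of basis is written as an explicit automorphism of the augmented degree-$0$ term, which verifies that the padded complexes are genuinely isomorphic as complexes, not merely degreewise.
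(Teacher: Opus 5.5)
\textbf{The cone route has a gap in Step 3.} The identity $G_\bullet\oplus T\cong F_\bullet\oplus\mathrm{Cone}(\mathrm{id}_F)$ is false in general. Take $G_\bullet$ to be $F_\bullet$ padded by $0\to R^m\xrightarrow{\mathrm{id}}R^m\to0$ in degrees $1,0$. Comparing ranks in degree $0$ then forces $\mathrm{rk}\,T_0=\mathrm{rk}\,F_0-m$, which is negative for large $m$. Presumably the sides are swapped, and the correct statement is $F_\bullet\oplus\mathrm{Cone}(\phi)\cong G_\bullet\oplus\mathrm{Cone}(\mathrm{id}_F)$.

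The real problem is what you cover with ``contractibility of the cone then gives''. The needed fact is that a degreewise-split short exact sequence of complexes with contractible quotient splits as a sequence of complexes. Equivalently, contractible complexes are projective-injective for the degreewise-split exact structure. Granting this, apply it to the mapping cylinder: $F_\bullet\hookrightarrow\mathrm{Cyl}(\phi)$ and $G_\bullet\hookrightarrow\mathrm{Cyl}(\phi)$ are degreewise split with cokernels $\mathrm{Cone}(\phi)$ and $\mathrm{Cone}(\mathrm{id}_F)$. Your sequence $0\to G_\bullet\to\mathrm{Cone}(\phi)\to F_\bullet[-1]\to0$ gives the identity only through a dual Schanuel argument against $0\to G_\bullet\to\mathrm{Cone}(\mathrm{id}_G)\to G_\bullet[-1]\to0$, which needs the same fact. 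This splitting fact is exactly what the paper imports through the Frobenius structure on $\mathbf{C}^{\mathrm{b}}(\mathrm{proj}\,R)$ and Krause's lemma. So your main route omits the key step rather than replacing it.

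\textbf{Your alternative, the inductive Schanuel argument, is sound and gives a complete proof by a genuinely different route.} State the induction for augmented complexes:
\begin{itemize}
\item At stage $j$ you have resolutions $P_{\ge j}\to K_j$ and $Q_{\ge j}\to K_j$ of a common syzygy.
\item Pad each by a disk on the other's degree-$j$ term, placed in degrees $j+1,j$.
\item Lifts $P_j\to Q_j$ and $Q_j\to P_j$ of the two augmentations give an elementary automorphism of $P_j\oplus Q_j$ carrying one augmentation to the other.
\item Later disks sit in degrees $\ge j+1$ with zero differential into degree $j$, so the inductively obtained isomorphism glues to this automorphism.
\end{itemize}
At the top, both last modules map isomorphically onto the same syzygy, so the ``stable-freeness issue'' you mention never arises there.

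Compared with the paper's route, yours is elementary and constructive. The paper goes: comparison theorem, then Frobenius category and Krause's lemma to get $F_\bullet\oplus C_\bullet\cong G_\bullet\oplus C'_\bullet$ with contractible $C_\bullet, C'_\bullet$, then a decomposition into disks on merely projective modules, then Quillen--Suslin--Swan to make them free. Your route gives explicit change-of-basis matrices and works over any ring. Every padding disk is a sum of the $F_i$ and $G_i$, so freeness is automatic and Quillen--Suslin is unnecessary. Your Step~2 remark that stable freeness suffices shows the same for the cone route. The paper's argument is shorter given its citations and more conceptual, but rests on heavier imported results.
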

\begin{proof}
    See End Matter.
\end{proof}
The equivalence between $F_\bullet$ and $G_\bullet$ in this lemma implies that the codes derived from length-2 segments, $F_{n+1}\rightarrow F_n \rightarrow F_{n-1}$ and $G_{n+1}\rightarrow G_n \rightarrow G_{n-1}$, are equivalent up to the addition of ancillary qubits in a product state and the application of a local unitary circuit. 

We can now state our main result, which follows directly from the preceding discussion.

\begin{theorem}\label{ham_equivalence}
    Consider the CSS complex ${F_{n+1} \to F_{n} \to F_{n-1}}$ obtained from a length-$D$ chain complex satisfying the conditions of Proposition.~\ref{d-dim-complex}, with $1 \leq n < D$. Its corresponding code Hamiltonian is equivalent to $k$ copies of the $(n,D-n)$-toric code Hamiltonian on an infinite lattice, where $k = \mathrm{dim}_{\mathbb{F}_2}(M)$.
\end{theorem}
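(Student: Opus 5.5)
The plan is to reduce the theorem to Lemma~\ref{tensor_ancilla_symplectic} after coarse graining. First, Proposition~\ref{d-dim-complex} shows that $M = F_0/\mathrm{im}\,\partial_1$ is zero-dimensional, and Proposition~\ref{haah} then supplies a scale $L$ with $\mathrm{ann}_{R'}(M)=\mathfrak{m}=(x_1^L-1,\dots,x_D^L-1)$. We coarse grain by $L$, i.e.\ restrict scalars to $R'=\mathbb{F}_2[x_1^{\pm L},\dots,x_D^{\pm L}]$. This is one of the allowed equivalence moves on the stabiliser complex. By Lemma~\ref{coarse_grain}, the restricted complex $F'_\bullet$ is a bounded free resolution of $N\cong (R'/\mathfrak{m})^{\oplus k}$ by finite-rank free $R'$-modules, with $k=\dim_{\mathbb{F}_2} M$.

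Second, we build a competing resolution of $N$. The ring $R'$ is again a Laurent polynomial ring in $D$ variables $y_i=x_i^L$, and $\mathfrak{m}=(y_1-1,\dots,y_D-1)$ is generated by a regular sequence. The Koszul complex $K_\bullet$ on $(y_1-1,\dots,y_D-1)$ is therefore a length-$D$ free resolution of $R'/\mathfrak{m}$, and $G_\bullet := K_\bullet^{\oplus k}$ resolves $N$. By definition, the segment $G_{n+1}\to G_n\to G_{n-1}$ is $k$ copies of the $(n,D-n)$-toric code on the coarse-grained lattice $\mathbb{Z}^D$, whose unit cell is $L^D$ original sites.

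Third, we apply Lemma~\ref{tensor_ancilla_symplectic} over $R'$ to $F'_\bullet\to N$ and $G_\bullet\to N$. This gives trivial complexes $C_\bullet, C'_\bullet$, sums of $0\to R'^m\xrightarrow{\mathrm{id}}R'^m\to 0$ in various degrees, together with an isomorphism of complexes $\Phi: F'_\bullet\oplus C_\bullet\cong G_\bullet\oplus C'_\bullet$. We then restrict to degrees $n+1,n,n-1$. A trivial summand $R'^m\xrightarrow{\mathrm{id}}R'^m$ placed in degrees $(j,j-1)$ looks different depending on where it sits relative to the segment:
\begin{itemize}
\item if it lies inside the segment, it contributes $m$ ancilla qubits per site fixed in a $Z$ or $X$ product state;
\item if it straddles the edge (degrees $(n+2,n+1)$ or $(n-1,n-2)$), it contributes only extra check modules with no qubits, which we can discard or re-add as trivial checks;
\item if it lies outside the segment, it contributes nothing.
\end{itemize}
The degreewise isomorphism $\Phi_n$ on qubit modules is an $R'$-linear automorphism $A$ of the free module $F'_n\oplus C_n\cong G_n\oplus C'_n$. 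The map $A\oplus (A^{\dagger})^{-1}$ on $F_n\oplus F^n$ preserves $\lambda$, so it is a symplectic transformation (a CNOT-type local circuit). Because $\Phi$ commutes with differentials, this transformation carries the stabiliser module $Z(F_n\oplus F^n)$ to the corresponding one for $G$. Here $\Phi_{n\pm1}$ only reparametrise the checks and do not change the stabiliser module.

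The main obstacle is this bookkeeping step. We must check that the stabiliser module depends only on the images $\mathrm{im}\,d_{n+1}$ and $\mathrm{im}\,d_n^\dagger$, so that the isomorphisms on check modules are irrelevant and only $\Phi_n$ matters. We must also confirm that each trivial summand in or adjacent to the segment really is a tensored ancilla or harmless, in the sense of Haah's definition. The remaining steps are direct applications of the earlier results.
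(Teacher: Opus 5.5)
Your proposal is correct and follows essentially the same route as the paper: you coarse grain using Proposition~\ref{haah} and Lemma~\ref{coarse_grain}, compare $F'_\bullet$ with $k$ Koszul resolutions of $R'/\mathfrak{m}$ via Lemma~\ref{tensor_ancilla_symplectic}, and use the degree-$n$ component of the resulting chain isomorphism to build the symplectic map $f_n\oplus(f_n^{-1})^\dagger$. Your sorting of the contractible disks into in-segment ancillas, edge-straddling trivial checks, and pieces outside the segment is exactly the paper's decomposition into the two ancilla summands and the $R\to 0\to 0$, $0\to 0\to R$ summands.
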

\begin{proof}
    We want to show that the stabiliser modules become equivalent after some sequence of coarse graining, symplectic transformations, and tensoring ancillas. Let $G_\bullet$ be the direct sum of $k$ Koszul resolutions of $R'/\mathfrak{m}$. It is a free resolution of $(R'/\mathfrak{m})^k$, and any three consecutive terms describe the CSS complex of a $D$-dimensional toric code. After coarse-graining, $F_\bullet + C_\bullet =: F_\bullet' \cong G_\bullet ':= G_\bullet + C_\bullet'$ by Lemma.~\ref{coarse_grain} and Lemma.~\ref{tensor_ancilla_symplectic} and where each $C_\bullet$ is a finite direct sum of $0 \to R \xrightarrow[]{\mathrm{id}} R \to 0$ at different locations. Let the isomorphism be $f: F_\bullet' \to G_\bullet'$. Analogously, $F'^\bullet  \cong G'^\bullet$ by the isomorphism $f^\dagger$. Consider the following chain isomorphism:
    \begin{widetext}
    \begin{equation}
    \begin{tikzcd}[column sep=huge, row sep=large]
    F'_{n+1} \oplus F'^{n-1} 
    \arrow[r, "d_{n+1}\oplus d_n^\dagger"] \arrow[d, "f_{n+1} \oplus (f_{n-1}^{-1})^\dagger"']
    &
    F_n' \oplus F'^n 
    \arrow[r, "d_n \oplus d_{n+1}^\dagger"] \arrow[d, "f_n \oplus (f_n^{-1})^\dagger"'] 
    &
    F'_{n-1}\oplus F'^{n+1} 
    \arrow[d, "f_{n-1} \oplus (f_{n+1}^{-1})^\dagger"] \\
    G'_{n+1} \oplus G'^{n-1} 
    \arrow[r, "d_{n+1}\oplus d_n^\dagger"'] 
    &
    G'_n \oplus G'^n 
    \arrow[r, "d_n\oplus d_{n+1}^\dagger"'] 
    &
    G'_{n-1}\oplus G'^{n+1}
    \end{tikzcd}
    \end{equation}
    \end{widetext}

    At the top and bottom are the stabiliser complexes formed from the CSS complexes described by $F'^\bullet$ and $G'^\bullet$ at degrees $n-1$ to $n+1$. It can be checked that $f_n \oplus (f_n^{-1})^\dagger$ is a symplectic transformation that induces an isomorphism between the stabiliser modules of the two complexes. Thus, the two complexes describe equivalent Hamiltonians. Now consider the summand 
    $C_{n+1} \oplus C^{n-1} \to C_n \oplus C^n \to C_{n-1} \oplus C^{n+1}$ 
    of the top complex. It can be further decomposed into summands $$R \xrightarrow[]{\left[\begin{smallmatrix} 1 \\ 0 \end{smallmatrix}\right]} R^2 \xrightarrow[]{\left[\begin{smallmatrix} 0 & 1 \end{smallmatrix}\right]} R, \quad R \xrightarrow[]{\left[\begin{smallmatrix} 0 \\ 1 \end{smallmatrix}\right]} R^2 \xrightarrow[]{\left[\begin{smallmatrix} 1 & 0 \end{smallmatrix}\right]} R,$$
    $R \to 0 \to 0$, and $0 \to 0 \to R.$ The first two types correspond to tensoring ancillas and removing them preserves Hamiltonian equivalence.
    The last two do not affect the stabiliser module and removing them similarly preserves equivalence. We can also remove similar summands from the bottom complex. We thus get that the stabiliser complexes formed from the CSS complexes described by $F_\bullet$ and $G_\bullet$ at degrees $n-1$ to $n+1$ describe equivalent Hamiltonians.
\end{proof}

\begin{corollary}
    Any $D$-variate $D$-cycle code is Hamiltonian-equivalent to $k$ copies of some $D$-dimensional toric code on an infinite lattice.
\end{corollary}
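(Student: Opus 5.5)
The plan is to reduce the Corollary to Theorem~\ref{ham_equivalence}. This requires checking that a $D$-variate $D$-cycle code arises as a length-2 segment of a length-$D$ complex satisfying the hypotheses of Proposition~\ref{d-dim-complex}. By definition, such a code is the segment at degrees $n+1,n,n-1$, with $1\le n<D$, of the Koszul complex $K_\bullet(f_1,\dots,f_D)$. Here $(f_1,\dots,f_D)$ is a regular sequence in $R$, and the complex resolves $M=R/(f_1,\dots,f_D)$. The complex has length exactly $D$ and consists of finite-rank free modules $K_i=R^{\binom{D}{i}}$. Because the sequence is regular, $K_\bullet$ is exact except at degree $0$, where its homology is $M$. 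So $K_\bullet$ is a free resolution of $M$, and the only remaining hypothesis is exactness of the ZX-dual $K^\bullet$ at degrees $<D$.

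For the dual, I would use the standard self-duality of the Koszul complex. Let $\bar f_i$ denote the image of $f_i$ under the antipode $x_j\mapsto x_j^{-1}$, which is a ring automorphism of $R$. Then $K^\bullet$, read from degree $D$ down to $0$, is isomorphic to $K_\bullet(\bar f_1,\dots,\bar f_D)$ up to reindexing $i\mapsto D-i$. The isomorphism is the Hodge-star pairing $\Lambda^i R^D\cong(\Lambda^{D-i}R^D)^*$; over $\mathbb{F}_2$ no signs arise. Since the antipode is an automorphism, $(\bar f_1,\dots,\bar f_D)$ is again a regular sequence. Hence $K^\bullet$ is exact at every degree except its end $D$, which is precisely the condition of Proposition~\ref{d-dim-complex}. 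That proposition then gives that $M$ is zero-dimensional. One can also see this directly: $D$ regular elements in the $D$-dimensional ring $R$ generate an ideal of height $D$, hence of dimension $0$.

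Theorem~\ref{ha_equivalence} then applies with $k=\dim_{\mathbb{F}_2}R/(f_1,\dots,f_D)$. It shows the code is Hamiltonian-equivalent to $k$ copies of the $(n,D-n)$-toric code. This is a $D$-dimensional toric code, which is what the Corollary claims. The degenerate case $k=0$, where $M=0$ and the code is trivial, is covered by the convention of zero copies.

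I expect the main obstacle to be the duality step. One must verify carefully that the transpose-conjugate differentials $d_i^\dagger$ of the Koszul complex match the Koszul differentials of $(\bar f_i)$ under the Hodge-star identification. This is a routine computation on basis wedges $e_S\mapsto e_{[D]\setminus S}$: the entry of $d_i$ from $e_S$ to $e_{S\setminus\{j\}}$ is $f_j$, and after the $\dagger$ it becomes the entry $\bar f_j$ from $e_{[D]\setminus(S\setminus\{j\})}$ to $e_{[D]\setminus S}$. The second point needing care is that regularity survives the antipode, but this is immediate because the antipode is a ring automorphism.
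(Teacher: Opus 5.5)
Your proposal is correct and takes the same route as the paper: the Corollary is a direct application of Theorem~\ref{ham_equivalence} to the Koszul resolution of $R/(f_1,\dots,f_D)$. The paper does not spell out that the hypotheses of Proposition~\ref{d-dim-complex} hold. Your self-duality argument supplies this check correctly: the ZX-dual is isomorphic to $K_\bullet(\bar f_1,\dots,\bar f_D)$ after reindexing, and $(\bar f_1,\dots,\bar f_D)$ is regular because the antipode is a ring automorphism, so $F^\bullet$ is exact at degrees $<D$.
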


Although MVMC codes of this form are the most obvious class for which our results apply, the free resolutions from which our CSS complexes are taken need not be Koszul in general; indeed, our result admits any length-$D$ finite-rank free resolution whose ZX-dual is also a resolution.

%%%%%%%%%%%%%%%%%%%%%%%%%%%%%%%%%%%%%%%%%%%%%%%%%%%%%%%%%%%%%%%%%%%%
\noindent\textbf{Discussion} --- In this work, we have shown that any DDTI CSS code derived from a length-2 segment of length-$D$ resolution with a ZX-dual resolution, is equivalent to copies of a $D$-dimensional toric code.
In particular, this applies to all multivariate multicycle codes 
where the number of variables equals the number of cycles. This does not however apply to families with more variables than cycles; topological charges are typically not mobile in this case and thus Lemma.~\ref{coarse_grain} does not hold. We remark that it is not possible to meaningfully construct MVMC codes with more cycles than variables.

This work raises a number of questions we leave for future work. 
Our results show that all DDTI CSS codes derived from length-$D$ resolutions with ZX-dual resolutions fall into gapped quantum liquid phases of matter~\cite{Yoshida_2011,Zeng_2015}. 
Does our result provide a full classification of all TI gapped quantum liquid codes?
Can we loosen the conditions required of the codes we consider? In particular, can our results be generalized beyond translation-invariant codes to show that any finite-range qLDPC code in $D$-dimensions described by an irreducible length-$D$ chain complex is equivalent to copies of a $D$-dimensional toric code. 
Similarly, is any length-$K$ chain complex with $K>D$ that is locally embedded into $D$-dimensional space equivalent to a length-$D$ chain complex? 
We note that this holds for any DDTI free resolution of a finitely generated module.
Can the approach used here be extended to the classification of fracton codes? 
Finally, can the structure theorem we have derived be applied to find decoders for higher dimensional translation-invariant codes, following Refs.~\cite{sahay2026matchingdecoderbivariatebicycle,tan2026generalizedmatchingdecoders2d}? 

\section*{Data Availability}
No data was generated during this work. 

\section*{Acknowledgements} 
While this work was in progress, an independent work was posted to the arxiv with overlapping results~\cite{song2026koszulcomplexstabilizermodelssuperselection}. 
This inspired the addition of Proposition~\ref{d-dim-complex} to our work.
DJW is supported by the Australian Research Council Discovery Early Career Research Award (DE220100625). 

\section*{Author contributions}
AL and DJW both contributed extensively to this paper.

\bibliography{main.bib}

\vspace{2cm}

\section*{End Matter}
Here, we prove several basic results that are used in the main text. 

\noindent\textbf{Proof of Lemma.~\ref{tensor_ancilla_symplectic}} ---
Let $R := \mathbb{F}_2[x_1^{\pm 1}, \dots, x_D^{\pm 1}]$.

\begin{proposition}[Comparison theorem, {\cite[Theorem 2.2.6]{Weibel_1994}}]
\label{comparison_theorem}
    Given a projective resolution $P_\bullet \to_\epsilon M$ of an $R$-module $M$ and a resolution $Q_\bullet \to_\eta N$ of a $R$-module $N$, then for a map $f': M \to N$ there exists a chain map $f: P_\bullet \to Q_\bullet$, unique up to chain-homotopy, such that $\eta f_0 = f' \epsilon$.
\end{proposition}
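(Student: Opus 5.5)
The plan is to prove the comparison theorem by building the chain map degree by degree, lifting along the resolution $Q_\bullet$ and using projectivity of each $P_n$. Uniqueness up to homotopy will come from the same lifting argument applied to the difference of two lifts.

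\emph{Existence.} Since $\eta: Q_0 \to N$ is surjective and $P_0$ is projective, the map $f'\epsilon: P_0 \to N$ lifts to some $f_0: P_0 \to Q_0$ with $\eta f_0 = f'\epsilon$.

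Now suppose $f_0,\dots,f_n$ have been constructed with $d^Q_i f_i = f_{i-1} d^P_i$ for $1\le i\le n$; for $i=0$ the condition is $\eta f_0 = f'\epsilon$. Consider $g := f_n d^P_{n+1}: P_{n+1} \to Q_n$. We have
\begin{align}
d^Q_n g = f_{n-1} d^P_n d^P_{n+1} = 0 .
\end{align}
In the case $n=0$ we instead get $\eta g = f'\epsilon d^P_1 = 0$. Either way, $g$ lands in $\ker d^Q_n$ (respectively $\ker\eta$). By exactness of the augmented complex $Q_\bullet \to N$, this kernel equals $\operatorname{im} d^Q_{n+1}$. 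So $d^Q_{n+1}: Q_{n+1} \to \operatorname{im} d^Q_{n+1}$ is a surjection. Projectivity of $P_{n+1}$ then lets us lift $g$ to $f_{n+1}: P_{n+1}\to Q_{n+1}$ with $d^Q_{n+1} f_{n+1} = f_n d^P_{n+1}$. Induction yields the chain map.

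\emph{Uniqueness up to homotopy.} Let $f, \tilde f$ be two lifts and set $h = f - \tilde f$. Then $h$ is a chain map lifting the zero map $M \to N$. It suffices to construct $s_n: P_n \to Q_{n+1}$ with
\begin{align}
h_n = d^Q_{n+1} s_n + s_{n-1} d^P_n ,
\end{align}
where $s_{-1}=0$. Since $\eta h_0 = 0$, $h_0$ lands in $\ker \eta = \operatorname{im} d^Q_1$, and projectivity gives $s_0$ with $d^Q_1 s_0 = h_0$.

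Inductively, put $u := h_{n} - s_{n-1} d^P_{n}$. Then
\begin{align}
d^Q_n u = h_{n-1} d^P_n - d^Q_n s_{n-1} d^P_n = (h_{n-1} - d^Q_n s_{n-1}) d^P_n = s_{n-2} d^P_{n-1} d^P_n = 0 ,
\end{align}
using the inductive hypothesis for $h_{n-1}$. Hence $u$ lands in $\ker d^Q_n = \operatorname{im} d^Q_{n+1}$, and lifting via projectivity of $P_n$ gives $s_n$.

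\emph{Main obstacle.} The only delicate point is bookkeeping at the low degrees, where the augmentation maps $\epsilon, \eta$ replace differentials. I would handle this by setting $P_{-1}=M$, $Q_{-1}=N$, $d_0 = \epsilon,\eta$ and $f_{-1}=f'$, so that the base case and the inductive step become a single uniform argument.

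Note that only projectivity of $P_\bullet$ and exactness of $Q_\bullet$ are used. In our application both are bounded free resolutions over $R$, and free modules are projective, so the result applies.
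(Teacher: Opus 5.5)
Your proposal is correct and is the standard proof of the comparison theorem from the cited reference (Weibel, Theorem 2.2.6), which the paper quotes without reproving: you lift $f_n d^P_{n+1}$ through $d^Q_{n+1}$ degree by degree, using projectivity of $P_{n+1}$ and exactness of the augmented $Q_\bullet$ with the convention $f_{-1}=f'$, and you apply the same lifting to $h=f-\tilde f$ to build the null-homotopy $s_n$. Your check that $d^Q_n\bigl(h_n - s_{n-1}d^P_n\bigr)=0$ is exactly the key step of that argument.
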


\begin{corollary}
    \label{homotopy_equiv}
    Two projective resolutions $P_\bullet \to_\epsilon M$ and $Q_\bullet \to_\eta M$ of an $R$-module $M$ are chain-homotopy equivalent.
\end{corollary}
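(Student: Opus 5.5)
The plan is to apply the comparison theorem (Proposition~\ref{comparison_theorem}) in both directions, using the identity map on $M$. First I take the identity $\mathrm{id}_M : M \to M$ and view $P_\bullet \to_\epsilon M$ as the projective resolution and $Q_\bullet \to_\eta M$ as the target resolution. This gives a chain map $f: P_\bullet \to Q_\bullet$ with $\eta f_0 = \epsilon$. Swapping the roles, which is allowed because $Q_\bullet$ is also projective, gives a chain map $g: Q_\bullet \to P_\bullet$ with $\epsilon g_0 = \eta$.

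Next I consider the composite $gf : P_\bullet \to P_\bullet$. It satisfies $\epsilon (gf)_0 = \eta f_0 = \epsilon$, so it is a chain map lifting $\mathrm{id}_M$. The identity chain map $\mathrm{id}_{P_\bullet}$ also lifts $\mathrm{id}_M$. By the uniqueness clause of Proposition~\ref{comparison_theorem}, applied with $P_\bullet$ as both source and target resolution, $gf$ is chain-homotopic to $\mathrm{id}_{P_\bullet}$. The symmetric argument with $Q_\bullet$ as both source and target shows that $fg \simeq \mathrm{id}_{Q_\bullet}$. Hence $f$ and $g$ are mutually inverse chain-homotopy equivalences, which is the claim.

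The only point needing care is the uniqueness step. It requires the source of the lifted map to be projective, which holds since it is $P_\bullet$ or $Q_\bullet$. It also requires the target to be a resolution, i.e.\ exact when augmented, which again holds. I do not expect a genuine obstacle here; the main thing is to state explicitly that both resolutions are projective, so that the comparison theorem applies in each direction.
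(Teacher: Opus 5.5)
Your proposal is correct and matches the paper's proof essentially step for step. You lift $\mathrm{id}_M$ in both directions via the comparison theorem, then use its uniqueness-up-to-homotopy clause to conclude $gf \simeq \mathrm{id}_{P_\bullet}$ and $fg \simeq \mathrm{id}_{Q_\bullet}$.
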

\begin{proof}
    From Proposition.~\ref{comparison_theorem} setting $f' = \mathrm{id}_M$ we have chain maps $f: P_\bullet \to Q_\bullet$ and $g: Q_\bullet \to P_\bullet$ such that $\eta f_0 = \epsilon$ and $\epsilon g_0 = \eta$. Composing them, we have $gf: P_\bullet \to P_\bullet$ is a chain map with $\epsilon (g_0 f_0) = \epsilon = \mathrm{id}_M \epsilon$. $id_{P_\bullet}$ is another chain map satisfying the same conditions. Applying Proposition.~\ref{comparison_theorem}, we get that $gf \simeq \mathrm{id}_{P_\bullet}$. Analogously, it can be shown that $fg \simeq \mathrm{id}_{Q_\bullet}$.
\end{proof}

\begin{proposition}[{\cite[Example 2.2.6]{KLEIN2016497}}]
    \label{property_complex}
    Let $\mathcal{A}$ be an exact category. Consider $\mathbf{C}^\mathrm{b}(\mathcal{A})$, the exact category of bounded complexes in $\mathcal{A}$ where the conflations are defined to be the short exact sequences which split degree-wise. $\mathbf{C}^\mathrm{b}(\mathcal{A})$ is a Frobenius category whose stable category $\underline{\mathbf{C}^\mathrm{b}(\mathcal{A})}$ is the bounded homotopy category of $\mathcal{A}$ and whose projective-injective objects are the bounded contractible complexes.
\end{proposition}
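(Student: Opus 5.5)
The plan is to verify the Frobenius axioms directly, using the standard ``disk'' complexes. For an object $X$ of $\mathcal{A}$ and $n\in\mathbb{Z}$, let $D^n(X)$ denote the complex $0 \to X \xrightarrow{\mathrm{id}} X \to 0$ concentrated in degrees $n$ and $n-1$. These are exactly the building blocks $0 \to R^n \xrightarrow{\mathrm{id}} R^n \to 0$ appearing in Lemma~\ref{tensor_ancilla_symplectic}.

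First I check that $\mathbf{C}^\mathrm{b}(\mathcal{A})$ with degreewise split short exact sequences is an exact category. Degreewise split sequences form the split exact structure on each $\mathcal{A}$-component, and pullbacks and pushouts of complexes are computed degreewise. The axioms (closure under composition, and base change of deflations and inflations) therefore reduce to the corresponding facts for split sequences in an additive category. This part is routine.

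Next I establish two adjunction-type isomorphisms, natural in $C$:
\begin{align}
\mathrm{Hom}(D^n(X),C)\cong \mathrm{Hom}_\mathcal{A}(X,C_n), \qquad \mathrm{Hom}(C,D^n(X))\cong \mathrm{Hom}_\mathcal{A}(C_{n-1},X).
\end{align}
Given a deflation $C\to C''$, it is split in each degree, so any map $X\to C''_n$ lifts to $X\to C_n$. By the first isomorphism, $D^n(X)$ is projective. Dually, the second isomorphism shows $D^n(X)$ is injective. Every bounded contractible complex is a finite direct sum of such disks: one splits off $\mathrm{im}\, d$ using the contracting homotopy. Hence bounded contractible complexes are projective-injective.

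For enough projectives, set $P(C)=\bigoplus_n D^n(C_n)$, which is a finite sum since $C$ is bounded. The map $P(C)\to C$ is adjoint to the identities $C_n \to C_n$; in degree $n$ it is $(\mathrm{id},d_{n+1})\colon C_n\oplus C_{n+1}\to C_n$, which is split surjective, so it is a deflation. Dually, $C\to I(C)=\bigoplus_n D^{n+1}(C_n)$, given by $(d_n,\mathrm{id})$ in each degree, is an inflation. Thus there are enough projectives and enough injectives. Any projective $Q$ is a direct summand of $P(Q)$, because the deflation splits. A direct summand of a contractible complex is contractible, since we can compose the contracting homotopy with the inclusion and projection. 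So the projectives, and dually the injectives, are exactly the bounded contractible complexes, and in particular they coincide. This proves the Frobenius property.

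Finally I identify the stable category. I must show that a chain map $f\colon C\to E$ factors through a contractible complex if and only if it is null-homotopic. If $f=\beta\alpha$ with $\alpha$, $\beta$ passing through a contractible complex $K$ with contraction $s$, then $h=\beta s\alpha$ is a null-homotopy of $f$. Conversely, given a homotopy $h$ with $f=dh+hd$, I define a chain map $I(C)\to E$ using the second isomorphism above, with components $h_n\colon C_n \to E_{n+1}$. I then check that its composite with the inflation $C\to I(C)$ equals $f$. Since the objects of both categories are the same, quotienting by maps that factor through projective-injectives is the same as quotienting by null-homotopic maps, so the stable category is $K^\mathrm{b}(\mathcal{A})$.

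I expect the main obstacle to be the sign and index bookkeeping in this last factorization. One must verify that the maps $h_n$ really assemble into a chain map $I(C)\to E$ and that the composite recovers $f$ exactly. I would check this degree by degree using $f_n = d_{n+1}h_n + h_{n-1}d_n$.
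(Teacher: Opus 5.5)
The paper gives no proof of this statement; it only cites Klein. So your direct verification has to be judged on its own merits. It is the standard argument: disks $D^n(X)$ are projective and injective by the two adjunctions, $P(C)\to C$ and $C\to I(C)$ give enough projectives and injectives, and a null-homotopy of $f$ factors $f$ through $C\to I(C)$. Most of it is right.

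However, one step is false at the generality of the statement: ``every bounded contractible complex is a finite direct sum of disks.'' Splitting off $\mathrm{im}\,d$ means splitting the idempotents $d_{n+1}s_n$ on $C_n$, and a general exact category need not be idempotent complete. For example, let $\mathcal{A}$ be the finite-dimensional vector spaces of dimension $\neq 1$, with the split exact structure. Take an exact complex $0\to k^2\to k^3\to k^3\to k^2\to 0$ in degrees $3$ to $0$ with $\mathrm{rank}\,d_2=1$. It is contractible in $\mathcal{A}$, because the homotopy exists in $\mathrm{vect}_k$ and $\mathcal{A}$ is full. But a disk decomposition would need a summand $D^2(X)$ with $\dim X=\mathrm{rank}\,d_2=1$, which is not in $\mathcal{A}$. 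So your route to ``contractible $\Rightarrow$ projective-injective'' fails as written.

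The repair uses your own last paragraph. If $C$ is contractible, then $\mathrm{id}_C$ is null-homotopic, so it factors through the inflation $C\to I(C)$. Hence $C$ is a retract of $I(C)=\bigoplus_n D^{n+1}(C_n)$. Since $I(C)$ is both projective and injective, and retracts preserve both properties, $C$ is projective-injective. With that substitution everything else goes through. For the paper's application, $\mathcal{A}=\mathrm{proj}\,R$ is idempotent complete, so your disk decomposition is valid there; that is Proposition~\ref{contractible_decomp}.

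Two minor slips:
\begin{itemize}
\item The map $I(C)\to E$ comes from the \emph{first} isomorphism, $\mathrm{Hom}(D^{n+1}(C_n),E)\cong\mathrm{Hom}_{\mathcal{A}}(C_n,E_{n+1})$, not the second.
\item ``Split surjective, hence a deflation'' also needs the degreewise kernel to exist. Here it does, namely $C_{n+1}$ via $c\mapsto(-d_{n+1}c,c)$, but a split epimorphism in a non-idempotent-complete category need not have a kernel.
\end{itemize}
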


\begin{proposition}[{\cite[Lemma 2.1.27]{Krause_2021}}]
    \label{homotopy_to_iso}
    Let $\mathcal{A}$ be a Frobenius category and $\underline{\mathcal{A}}$ be its stable category. For objects $X$ and $Y$, $X \cong Y$ in $\underline{\mathcal{A}}$ if and only if $X \oplus I \cong Y \oplus J$ for injective objects $I, J \in \mathcal{A}$.
\end{proposition}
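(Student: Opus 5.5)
The plan is to prove the two implications separately, using the standard structure of a Frobenius category $\mathcal{A}$. Injective and projective objects coincide in $\mathcal{A}$, and every object admits an inflation into an injective. The stable category $\underline{\mathcal{A}}$ has the same objects, with morphisms taken modulo those factoring through an injective, and it is triangulated by Happel's theorem.

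\emph{The ``if'' direction.} Every injective object $I$ is isomorphic to $0$ in $\underline{\mathcal{A}}$, because $\mathrm{id}_I$ factors through an injective object. The quotient functor $\mathcal{A}\to\underline{\mathcal{A}}$ is additive, so it preserves finite direct sums. Hence an isomorphism $X\oplus I\cong Y\oplus J$ in $\mathcal{A}$ gives
\begin{equation}
X\cong X\oplus I\cong Y\oplus J\cong Y \quad\text{in } \underline{\mathcal{A}}.
\end{equation}

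\emph{The ``only if'' direction.} Let $f\colon X\to Y$ be a morphism of $\mathcal{A}$ whose class in $\underline{\mathcal{A}}$ is an isomorphism. Choose an inflation $i\colon X\to I$ with $I$ injective. We show that $\left[\begin{smallmatrix} f\\ i\end{smallmatrix}\right]\colon X\to Y\oplus I$ is an inflation by factoring it as
\begin{equation}
X\xrightarrow{\left[\begin{smallmatrix} 1\\ f\end{smallmatrix}\right]} X\oplus Y \xrightarrow{\,i\oplus 1\,} I\oplus Y .
\end{equation}
The first map is a split monomorphism, hence an inflation. The second is a direct sum of inflations, hence an inflation. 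Composites of inflations are inflations, and the composite equals $\left[\begin{smallmatrix} f\\ i\end{smallmatrix}\right]$ after reordering the summands.

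Complete this inflation to a conflation
\begin{equation}
X\to Y\oplus I\to Z .
\end{equation}
Conflations of this form yield distinguished triangles $X\to Y\oplus I\to Z\to \Sigma X$ in $\underline{\mathcal{A}}$. In the stable category the first map is $f$ followed by the isomorphism $Y\cong Y\oplus I$, so it is an isomorphism. A distinguished triangle whose first morphism is an isomorphism has its third object isomorphic to $0$, so $Z\cong 0$ in $\underline{\mathcal{A}}$.

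The remaining steps are as follows.
\begin{enumerate}
\item From $Z\cong 0$ in $\underline{\mathcal{A}}$, the identity $\mathrm{id}_Z$ factors through some injective $J'$. Then $Z$ is a retract of $J'$, so $Z$ is injective.
\item Since $Z$ is injective it is also projective, because $\mathcal{A}$ is Frobenius. A conflation ending in a projective object splits.
\item Splitting gives $Y\oplus I\cong X\oplus Z$ in $\mathcal{A}$, which is the claim with $J:=Z$ (up to swapping which side each injective sits on).
\end{enumerate}

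\emph{Main obstacle.} I expect the delicate step to be the passage ``$Z\cong 0$ stably $\Rightarrow$ $Z$ injective.''
\begin{itemize}
\item This step relies on the triangulated structure of $\underline{\mathcal{A}}$, namely that conflations induce distinguished triangles. That is standard, but it should be cited rather than reproved.
\item It also requires that a retract of an injective is injective. This needs the split monomorphism $Z\to J'$ to be an inflation, which holds if $\mathcal{A}$ is weakly idempotent complete.
\end{itemize}
In our application to $\mathbf{C}^{\mathrm{b}}(\mathcal{A})$ the complexes are built from free modules, and contractibility passes to summands. So if the general argument is problematic, I would instead verify this step directly: a summand of a bounded contractible complex is contractible.
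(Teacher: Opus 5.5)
Your proof is correct, and it works in an arbitrary Frobenius category. The paper does not prove this statement; it imports it from Krause. So there is no in-paper argument to compare against, and yours is a complete, standard proof.

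The step you flag as the main obstacle is not actually conditional. Suppose $\mathrm{id}_Z=ba$ with $a\colon Z\to J'$, $b\colon J'\to Z$ and $J'$ injective. Take an inflation $m\colon U\to V$ and any $g\colon U\to Z$. Extend $ag$ along $m$ to some $h\colon V\to J'$; then $bh$ extends $g$. This uses only the lifting-property definition of injectivity, with no weak idempotent completeness.

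The one place where such a hypothesis could matter is your claim that a split monomorphism is an inflation. In a general exact category this holds only when the split monomorphism has a cokernel. For your specific map $\left[\begin{smallmatrix}1\\ f\end{smallmatrix}\right]$ this is automatic: it is the split inflation $\left[\begin{smallmatrix}1\\ 0\end{smallmatrix}\right]$ followed by the automorphism $\left[\begin{smallmatrix}1&0\\ f&1\end{smallmatrix}\right]$ of $X\oplus Y$.

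It is worth comparing your route with the other common one. There you take a stable inverse $g$ and factor $\mathrm{id}_X-gf$ through an injective, which exhibits $X$ as a retract of $Y\oplus I$. You then still need the complementary idempotent on $Y\oplus I$ to split. In your conflation argument the complement $Z$ comes for free as the cokernel of an inflation, so no idempotent-splitting is needed. The cost is appealing to Happel's theorem that conflations induce distinguished triangles, which is standard.

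In the paper's application, $\mathbf{C}^{\mathrm{b}}(\mathrm{proj}\,R)$ is idempotent complete anyway, so your proposed fallback is unnecessary.
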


\begin{lemma}\label{almost_result}
    Given two bounded resolutions $P_\bullet \to M$ and $Q_\bullet \to M$ of a finitely-generated $R$-module $M$ by finitely-generated projective modules, there exist bounded contractible complexes of projective modules $C_\bullet$ and $C_\bullet'$ such that $P_\bullet + C_\bullet \cong Q_\bullet + C_\bullet'$.
\end{lemma}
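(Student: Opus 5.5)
We will combine the three preceding results. By Corollary~\ref{homotopy_equiv}, applied to the two projective resolutions $P_\bullet \to M$ and $Q_\bullet \to M$, the complexes $P_\bullet$ and $Q_\bullet$ are chain-homotopy equivalent. Both are bounded complexes of finitely-generated projective $R$-modules. We take $\mathcal{A}$ to be the additive category of finitely-generated projective $R$-modules with its split exact structure, so that every short exact sequence in $\mathcal{A}$ splits. Then $P_\bullet, Q_\bullet \in \mathbf{C}^\mathrm{b}(\mathcal{A})$. By Proposition~\ref{property_complex}, $\mathbf{C}^\mathrm{b}(\mathcal{A})$ is a Frobenius category whose stable category is the bounded homotopy category $\mathbf{K}^\mathrm{b}(\mathcal{A})$. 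A homotopy equivalence $P_\bullet \simeq Q_\bullet$ is therefore an isomorphism $P_\bullet \cong Q_\bullet$ in the stable category.

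We then apply Proposition~\ref{homotopy_to_iso} to obtain projective-injective objects $I, J$ of $\mathbf{C}^\mathrm{b}(\mathcal{A})$ with $P_\bullet \oplus I \cong Q_\bullet \oplus J$ in $\mathbf{C}^\mathrm{b}(\mathcal{A})$, that is, as honest chain complexes. By Proposition~\ref{property_complex} again, $I$ and $J$ are bounded contractible complexes, and their terms lie in $\mathcal{A}$, so they are complexes of finitely-generated projective modules. Setting $C_\bullet := I$ and $C'_\bullet := J$ gives the claim.

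The main point to check is that the chain-homotopy equivalence from Corollary~\ref{homotopy_equiv} really lives in $\mathbf{K}^\mathrm{b}(\mathcal{A})$. The comparison theorem produces chain maps between the complexes themselves, and the homotopies consist of $R$-linear maps between projective modules, so both are morphisms in $\mathcal{A}$. The only other subtlety is that the stable category of this Frobenius structure is identified with $\mathbf{K}^\mathrm{b}(\mathcal{A})$ as stated in Proposition~\ref{property_complex}. This requires that maps factoring through projective-injectives be exactly the null-homotopic maps, which we take from the cited result.

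For the subsequent Lemma~\ref{tensor_ancilla_symplectic}, we would additionally note two facts. First, a bounded contractible complex of projectives splits as a direct sum of complexes $0 \to P \xrightarrow{\mathrm{id}} P \to 0$, obtained by splitting off $\ker d$ degree by degree using the contracting homotopy. Second, over $R$ a Laurent polynomial ring, finitely-generated projectives are free by Quillen--Suslin/Swan, so each $P$ may be taken to be $R^n$ after possibly adding further trivial summands.
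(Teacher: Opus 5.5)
Your proposal is correct and follows the paper's proof essentially verbatim. You use Corollary~\ref{homotopy_equiv} to get $P_\bullet \simeq Q_\bullet$, the Frobenius structure on $\mathbf{C}^\mathrm{b}(\mathrm{proj}\,R)$ with stable category the bounded homotopy category (Proposition~\ref{property_complex}), and then Proposition~\ref{homotopy_to_iso} to obtain $P_\bullet \oplus I \cong Q_\bullet \oplus J$ with $I, J$ bounded contractible complexes of finitely-generated projectives.
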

\begin{proof}
    Consider the Frobenius category of bounded complexes of finitely generated projective $R$-modules $\mathbf{C}^\mathrm{b}(\mathrm{proj}\, R)$, per Proposition.~\ref{property_complex}. Thus its stable category coincides with its bounded homotopy category. As $P_\bullet \simeq Q_\bullet$ from Corollary.~\ref{homotopy_equiv}, we get from Propositions.~\ref{property_complex} and \ref{homotopy_to_iso} the desired result.
\end{proof}

\begin{proposition}\label{contractible_decomp}
    A bounded contractible complex $C_\bullet$ of projective $R$-modules is isomorphic to a direct sum of finitely many disks of the form $0 \to P \xrightarrow[]{\mathrm{id}} P \to 0$ where $P$ is projective.
\end{proposition}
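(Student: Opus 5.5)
We prove Proposition~\ref{contractible_decomp} by induction on the length of $C_\bullet$, splitting off one disk at the right-most nonzero degree at each step. Since $C_\bullet$ is contractible, it is exact, and it admits a contracting homotopy $s_n : C_n \to C_{n+1}$ with $d_{n+1}s_n + s_{n-1}d_n = \mathrm{id}_{C_n}$ for every $n$. Shifting indices, we may assume $C_n = 0$ for $n<0$, and let $N$ be the top degree at which $C_N \neq 0$.

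The first step is to show that $d_1 : C_1 \to C_0$ is a split surjection. Because $C_{-1}=0$, the homotopy identity in degree $0$ reads $d_1 s_0 = \mathrm{id}_{C_0}$, so $s_0$ is a section of $d_1$. This gives a decomposition
\begin{align}
C_1 = s_0(C_0) \oplus \ker d_1 ,
\end{align}
with $s_0(C_0)\cong C_0$ via $d_1$. Consequently $\ker d_1$ is a direct summand of the projective module $C_1$, and so it is projective.

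The next step is to split the complex. Since $d_1 d_2 = 0$, the image of $d_2$ lies in $\ker d_1$. Hence $C_\bullet$ is isomorphic to the direct sum of the disk $0 \to C_0 \xrightarrow{\mathrm{id}} C_0 \to 0$, placed in degrees $1,0$ and embedded via $s_0$ and $\mathrm{id}$, with the complex
\begin{align}
C'_\bullet : \cdots \to C_3 \xrightarrow{d_3} C_2 \xrightarrow{d_2} \ker d_1 \to 0 .
\end{align}
To confirm this is an isomorphism of complexes, note that the differential leaving $s_0(C_0)$ is an isomorphism onto $C_0$, that $\ker d_1$ maps to $0$, and that $C_2$ maps into $\ker d_1$. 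So the differentials respect the decomposition. The complex $C'_\bullet$ consists of projective modules and is bounded. It is also contractible: it is a direct summand of the contractible complex $C_\bullet$, since a summand of a contractible complex is contractible (compose the homotopy with the inclusion and the projection). The complex $C'_\bullet$ has strictly smaller length than $C_\bullet$, or, if $N=1$, it is zero.

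Induction on $N$ then finishes the argument, with base case $N=0$. In that case $C_\bullet = C_0$ is exact, so $C_0=0$ and the empty direct sum suffices. The only delicate point is the splitting step, namely verifying that $\ker d_1$ is projective and that the decomposition is compatible with the differentials; both follow from the section $s_0$. For the application to Lemma~\ref{tensor_ancilla_symplectic}, we also want each $P$ to be free of finite rank rather than merely projective. When the $C_i$ are finitely generated free, each $\ker d_1$ is finitely generated projective. Over the Laurent polynomial ring $R$, finitely generated projectives are free by the Quillen--Suslin theorem (the Swan extension to Laurent rings). This yields disks $0\to R^n \xrightarrow{\mathrm{id}} R^n\to 0$. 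Alternatively, one can add a complementary disk to make each $P$ stably free, and absorb it into the complexes $C_\bullet$ and $C_\bullet'$.
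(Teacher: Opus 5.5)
Your proof is correct and is essentially the paper's argument. In both, the contracting homotopy gives a section of $d_n$ over $Z_{n-1}$, which splits $C_n \cong Z_n \oplus Z_{n-1}$ and produces disks on the projective summands $Z_n$. Your appeal to Quillen--Suslin/Swan for freeness is also the one the paper uses to deduce Lemma~\ref{tensor_ancilla_symplectic}. The only difference is organisational: you peel off one disk at a time by induction, using that a summand of a contractible complex is contractible, so compatibility with the differentials is immediate at each step. The paper instead splits every degree at once and checks that the splitting maps $h_n$ assemble into a single chain isomorphism.
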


\begin{proof}
From contractibility, we have the existence of a collection of maps $\{c_n\}$ such that $cd + dc = \mathrm{id}_{C_\bullet}$. At $C_{n}$, we can restrict to $Z_{n}$ so this becomes $d_{n+1} c_{n} \vert_{Z_{n}} = \mathrm{id}_{Z_{n}}$. Thus $B_{n} = \mathrm{im}\, d_{n+1} \supseteq Z_{n}$ and thus $C_\bullet$ is exact. Consider then the short exact sequence:
    \begin{align}
        0 \to Z_n \xrightarrow{\iota} C_n \xrightarrow{d_n} Z_{n-1} \to 0
    \end{align}
    As $d_n c_{n-1} \vert_{Z_{n-1}} = \mathrm{id}_{Z_{n-1}}$, the sequence splits with $C_n \cong Z_n \oplus Z_{n-1}$, and by the splitting lemma, there exists an isomorphism $h_n: C_n \to Z_n \oplus Z_{n-1}$ such that $h_n \iota = \left[\begin{smallmatrix}
        \mathrm{id}_{Z_n} \\ 0
    \end{smallmatrix}\right]$ and $d_n (h_n)^{-1} = \left[\begin{smallmatrix}
        0 & \mathrm{id}_{Z_{n-1}}
    \end{smallmatrix}\right]$. It can be checked that $h_n$ gives a chain map between $C_\bullet$ and the chain complex $D_\bullet$ with $D_n = Z_{n} \oplus Z_{n-1}$ and differential $\left[\begin{smallmatrix}
        0 & \mathrm{id}_{Z_{n-1}} \\ 0 & 0
    \end{smallmatrix}\right]$. As each $Z_n$ is a summand of $C_n$ and thus projective, the claim follows with disks of the form $0 \to Z_n \to Z_n \to 0$ (and as finitely many $Z_n$ are nonzero, there are finitely many non-zero disks).
    
\end{proof}

\begin{proof}[Proof of Lemma.~\ref{tensor_ancilla_symplectic}]
    This follows immediately from Lemma.~\ref{almost_result} and Proposition.~\ref{contractible_decomp}, along with the well-known fact that all finitely-generated projective modules over the Laurent polynomial ring are free \cite{19305ce2-3cfa-35d1-a4a5-425a28d871f5,1976InMat..36..167Q}.
\end{proof}

\clearpage

\noindent\textbf{Proof of Proposition.~\ref{d-dim-complex}} ---
Let $R := \mathbb{F}_2[x_1^{\pm 1}, \dots, x_D^{\pm 1}]$.

\begin{proof}[Proof of Proposition.~\ref{d-dim-complex}]
We first note that the cochain complex (with respect to the $R$-module structure) $\mathrm{Hom}_R(F_\bullet,R)$ is isomorphic to the ZX-dual complex $F^\bullet$ under $f \mapsto \sum_i \overline{f(e_i)} e_i$. We can thus reformulate the conditions of the proposition as $\mathrm{grade}_R(M) \geq D$, where $\mathrm{grade}_R(M)$ measures how long the cochain complex remains exact. By a famous result of Rees'~\cite{Rees_1957}, $\mathrm{grade}_R (R/J) = \mathrm{grade}_R(M)$, where $J = \mathrm{ann}_R (M)$. The former is typically written as $\mathrm{grade}_R (J)$. We have $\mathrm{grade}_R(J) \leq \mathrm{codim}(J)$~\cite{Matsumura_1987} and thus $\mathrm{codim}(J) = D$. It follows that $M$ is a zero-dimensional module.
\end{proof}

\clearpage

\end{document}